\lstdefinestyle{Matlab}{
    language        = matlab,
    frame           = lines, 
    basicstyle      = \footnotesize,
    keywordstyle    = \color{blue},
    stringstyle     = \color{green},
    commentstyle    = \color{red}\ttfamily
}
\newtheorem{theorem}{Theorem}
\newtheorem{definition}[theorem]{Definition}
\newtheorem{remark}[theorem]{Remark}
\newtheorem{corollary}[theorem]{Corollary}
\newcommand{\xr}[1]{{\color{orange} {\small [Xuandi: #1]}}}
\newcommand{\val}{\text{val}}
\newcommand{\eps}{\varepsilon}
\newcommand{\sfv}{\mathsf{v}}
\newcommand{\ini}{\textsf{ini}}
\newcommand{\tar}{\textsf{tar}}
\newcommand{\kewen}[1]{{\color{brown}Kewen: #1}}
\newcommand{\RIH}[3]{\textrm{Gap}_{1,#1}~{#2}\textrm{-CSP}_{#3}~\textrm{Reconfiguration}}
\title{PCPP-Based Reconfiguration Inapproximability: \\ Query Complexity vs. Soundness Gap Trade-offs}
\author{
Venkatesan Guruswami\thanks{Email: {\tt venkatg@berkeley.edu}. Simons Institute for the Theory of Computing, and Departments of EECS and Mathematics, UC Berkeley. Research supported in part by NSF grants CCF-2228287 and CCF-2211972 and a Simons Investigator award.} 
\and
Xuandi Ren\thanks{Email: {\tt xuandi\_ren@berkeley.edu}. Department of EECS, UC Berkeley. Research supported in part by NSF CCF-2228287.}
\and 
Kewen Wu\thanks{Email: {\tt shlw\_kevin@hotmail.com}. Department of EECS, UC Berkeley. Research supported by a Sloan Research Fellowship and NSF CAREER Award CCF-2145474.}
}
\date{}
\begin{document}

\maketitle
\thispagestyle{empty}

%\xr{Shall we mention query complexity in the title? Like Query-Soundness Trade-off in CSP Reconfiguration Inapproximability from PCPP?}

\begin{abstract}
The Reconfiguration Inapproximability Hypothesis (RIH), recently established by Hirahara-Ohsaka (STOC'24) and Karthik-Manurangsi (ECCC'24), studies the hardness of reconfiguring one solution into another in constraint satisfaction problems (CSP) when restricted to approximate intermediate solutions.
In this work, we make a tighter connection between RIH's soundness gap and that of probabilistically checkable proofs of proximity (PCPP). Consequently, we achieve an improved trade-off between soundness and query complexity in Gap CSP Reconfiguration. Our approach leverages a parallelization framework, which also appears in some recent parameterized inapproximability results.
\end{abstract}

\section{Introduction}\label{sec:intro}

Tasks like solving puzzles and planning robot movements can be understood as transformations between different configurations. A valid configuration may be, for example, a feasible arrangement of puzzle pieces or a safe robot posture relative to obstacles.
A single \emph{reconfiguration} step can be interpreted as an adjacency relation between valid configurations, induced by one puzzle move or one robot action.
The study of reconfiguration focuses on the complexity of sequentially transforming an initial configuration to a target configuration for the problem in question.

Formally, let $\Phi$ be an instance of constraint satisfaction problem (CSP) on $n$ variables over alphabet $\Sigma$. %\vg{This is an unusual phrasing - we want to say it's an instance of a CSP. If we want to view it as a function, perhaps range should be $[0,1]$ so that we talk about value of unsatisfiable instances for the approximate version more easily. I see that this is all formally defined in \ref{def:csp}, so we can be informal/high level here since everyone knows CSP, as long as it's not too unusual or confusing.} \xr{fixed.}
An assignment $\sigma\in\Sigma^n$ is called a solution if it satisfies all constraints in $\Phi$.
Given two solutions $\sigma,\sigma'\in\Sigma^n$, the goal of the reconfiguration problem is to determine the feasibility of transforming $\sigma$ into $\sigma'$ by sequentially modifying symbols and staying in the solution space. That is, if there exist some positive integer $t$ and assignments $\sigma^{(0)},\sigma^{(1)},\ldots,\sigma^{(t)}\in\Sigma^n$ such that
\begin{enumerate}
\item\label{itm:intro_1} $\sigma^{(0)}=\sigma$, $\sigma^{(t)}=\sigma'$, and each $\sigma^{(i)},\sigma^{(i-1)}$ differ on at most one coordinate;
\item\label{itm:intro_2} each $\sigma^{(i)}$ is a solution to $\Phi$.
\end{enumerate}
Numerous algorithms and hardness results have been established for special cases of $\Phi$.
Below are several representative cases:

\begin{itemize}\setlength\itemsep{0em}
\item Say $\Phi$ is an instance of {\sc 3-SAT}, the satisfiability of which is already NP-complete. Its reconfiguration is PSPACE-complete \cite{gopalan2009connectivity}.
\item Say $\Phi$ is an instance of the {\sc 3-Coloring} problem, another standard NP-complete problem. Its reconfiguration is in P \cite{cereceda2011finding}.
\item Say $\Phi$ is an instance of {\sc Shortest-Path}, which has many textbook efficient algorithms.
However, its reconfiguration is PSPACE-complete \cite{bonsma2013complexity}.
\end{itemize}
We refer interested readers to the surveys by van den Heuvel \cite{van2013complexity} and Nishimura \cite{nishimura2018introduction} for more references.

A meaningful relaxation to the reconfiguration problem above is to allow \emph{approximation} for \Cref{itm:intro_2}: each $\sigma^{(i)}$ only needs to satisfy a small fraction of constraints in $\Phi$.
Indeed, under such relaxation, some previously hard reconfiguration problems become tractable \cite{ito2011complexity,ito2014approximability,ohsaka2022reconfiguration}.
On the hardness side, using the classic PCP theorem \cite{arora1998proof,arora1998probabilistic}, some reconfiguration problems, including {\sc 3-SAT}, were shown to remain NP-hard under approximation \cite{ito2011complexity}.
As the exact version of the {\sc 3-SAT} reconfiguration is PSPACE-complete \cite{gopalan2009connectivity}, there is a huge gap between the NP-hard lower bound \cite{ito2011complexity} and the PSPACE upper bound.
To remedy this, Ohsaka \cite{ohsaka2023gap} put forth a reconfiguration analogue of the PCP theorem, named \emph{Reconfiguration Inapproximability Hypothesis (RIH)}, which implies the PSPACE-completeness of approximate {\sc 3-SAT} reconfiguration, as well as the approximate version of many other popular reconfiguration problems. 

To describe RIH, we first formally define the approximate reconfiguration problem.

\begin{definition}[$\RIH{\eps}{q}{\Sigma}$]\label{def:intro_gapreconfig}
Let $q\ge1$ be an integer, $\eps\in (0,1]$, and $\Sigma$ be a finite alphabet.
An instance of $\RIH{\eps}{q}{\Sigma}$ is specified by a $q$-CSP$_\Sigma$ instance $\Phi$\footnote{See \Cref{def:csp} for the formal definition of $q$-CSP$_\Sigma$.} and two solutions $\sigma,\sigma'\in\Sigma^n$. %, where each constraint in $\Phi$ depends on at most $q$ coordinates of its input.
The task is to distinguish the following two cases:
\begin{itemize}
    \item \textsc{Yes Case.} 
    There exist assignments $\sigma^{(0)},\ldots,\sigma^{(t)}\in\Sigma^n$ such that $\sigma^{(0)}=\sigma$, $\sigma^{(t)}=\sigma'$, each $\sigma^{(i)},\sigma^{(i-1)}$ differ on at most one coordinate, and each $\sigma^{(i)}$ satisfies all constraints in $\Phi$.
    \item \textsc{No Case.}
    There exist no assignments $\sigma^{(0)},\ldots,\sigma^{(t)}\in\Sigma^n$ such that $\sigma^{(0)}=\sigma$, $\sigma^{(t)}=\sigma'$, each $\sigma^{(i)},\sigma^{(i-1)}$ differ on at most one coordinate, and each $\sigma^{(i)}$ satisfies at least $\eps$ fraction of the constraints in $\Phi$.
\end{itemize}
\end{definition}

In light of \Cref{def:intro_gapreconfig}, RIH postulates that, for some fixed $q,\Sigma$, and $\eps < 1$, it is PSPACE-hard\footnote{By brute-force, $\RIH\eps q\Sigma$ is in PSPACE. Hence this in fact asserts PSPACE-completeness.} to solve $\RIH{\eps}{q}{\Sigma}$.
Very recently, Hirahara-Ohsaka \cite{HO24} and Karthik-Manurangsi \cite{KM24} independently proved RIH.
Using additional reductions by Ohsaka \cite{Ohs24}, the state-of-the-art quantitative bound is the PSPACE-hardness of $\RIH{0.9942}{2}{\Sigma}$ for some constant-sized alphabet $\Sigma$.\footnote{\cite{Ohs24ICALP} shows the PSPACE-hardness of $\RIH{1-\varepsilon}{2}{\Sigma}$ for $\varepsilon$ and $\Sigma$ both being explicit constants, namely, $\varepsilon> 10^{-18}$ and $|\Sigma|<2\times10^6$. Furthermore, as shown by \cite{KM24,HO24}, if the arity $q$ and the alphabet size $|\Sigma|$ are arbitrarily large constants, then the soundness can be arbitrarily small. In this work, we focus on minimizing the soundness of Gap 2-CSP$_{\Sigma}$ Reconfiguration, while keeping $|\Sigma|$ a constant.}
We remark the $0.0058$ soundness gap here will become even worse in downstream applications \cite{ohsaka2023gap}; and it is unclear if the gap amplification techniques in \cite{Ohs24} can be improved to give significantly better bounds.
%\kewen{what if this 0.9942 is improved?}\xr{Added after \Cref{cor:main}}

The main contribution of our work is a tighter connection between the soundness gap of RIH and the one of probabilistic checkable proof of proximity (PCPP), formulated in \Cref{thm:main}.
We will introduce PCPP shortly. At this point, let us just comment that PCPP is a widely studied concept in complexity theory \cite{BGHSV06, dinur2006assignment,Harsha2004Robust}.
%from which we hope RIH can be quantitatively improved.

\begin{restatable}[Main]{theorem}{thmmain}
\label{thm:main}
    For some fixed constant $\delta>0$, if for every Boolean circuit of size $n$, there is a $q$-query PCPP construction with proximity parameter $\delta$, soundness $1-\varepsilon$ and randomness $O(\log n)$, then for some alphabet $\Sigma$ of constant size, $\RIH{1-\eps}{(q+1)}{\Sigma}$ is PSPACE-hard.
\end{restatable}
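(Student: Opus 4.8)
The plan is to build a reduction from a PSPACE-hard reconfiguration problem that is already known to be hard for some constant soundness—namely the RIH established by \cite{HO24,KM24}, say $\RIH{\eps_0}{q_0}{\Sigma_0}$ for absolute constants—and then \emph{amplify} the gap using the PCPP, following the template of the classical PCP-based hardness-of-approximation reductions but carried out ``reconfiguration-compatibly.'' Start with an instance $(\Phi_0,\sigma,\sigma')$ of the hard reconfiguration problem. Think of a candidate assignment $\tau\in\Sigma_0^n$ together with a canonical proof string $\pi(\tau)$ as the natural object to work with. For each constraint $C$ of $\Phi_0$ (there are $m$ of them), we have a Boolean circuit $\mathrm{Ckt}_C$ of size $\mathrm{poly}(n)$ that checks whether the restriction of $\tau$ to $C$'s variables satisfies $C$; feed $\mathrm{Ckt}_C$ into the assumed PCPP to get a proof-of-proximity $\pi_C$ over the assignment bits of $C$. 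The new CSP $\Phi$ has variables encoding (an encoding of) $\tau$ plus all the PCPP proof strings $\{\pi_C\}_C$; its constraints are, for each $C$ and each of the $O(\mathrm{poly}(n))$ coin tosses of the PCPP verifier, the $q$-query local test, giving arity $q$ in the proof/assignment bits. The extra ``$+1$'' arity in the statement comes from the gadget one needs to glue the PCPP-level encoding to the original $\Sigma_0$-variables so that the two stay consistent during reconfiguration—concretely, one keeps the original variables around and adds equality/consistency constraints linking a variable to the block of bits that is supposed to encode it, which raises arity by one.

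The key steps, in order: (1) \textbf{Completeness / Yes-case.} If $\sigma\rightsquigarrow\sigma'$ is a valid reconfiguration in $\Phi_0$, lift it: whenever a coordinate of $\tau$ changes, recompute the affected PCPP proofs $\pi_C$ and rewrite their bits one at a time. This is where the PCPP must be chosen with a \emph{proof length} small enough and an \emph{encoding} local enough that each single-symbol change downstairs becomes a short sequence of single-bit changes upstairs, \emph{and} every intermediate string is still fully satisfying. The cleanest way is to first pass through an ``honest-token'' intermediate problem where proofs are recomputed between reconfiguration steps; here I expect to need a standard trick: expand the alphabet so a block can hold ``old value $\parallel$ new value $\parallel$ flag'' and march the flag across, keeping all constraints satisfied. (2) \textbf{Soundness / No-case.} Suppose some reconfiguration sequence in $\Phi$ keeps every intermediate assignment satisfying a $(1-\eps)$ fraction of constraints. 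Decode each intermediate $\Phi$-assignment to a $\Sigma_0$-assignment $\tau^{(i)}$ of the original variables (via the consistency gadget / majority decoding of blocks). Because the PCPP has proximity parameter $\delta$ and soundness $1-\eps$: if $\tau^{(i)}$ restricted to $C$ is $\delta$-far from satisfying $C$, then for \emph{that} $C$ at least an $\eps$ fraction of its PCPP constraints reject, regardless of the proof bits. A counting argument then shows: if $\tau^{(i)}$ violates more than a tiny fraction of $\Phi_0$'s constraints (in the $\delta$-robust sense), then $\Phi$'s satisfied-fraction drops below $1-\eps$—contradiction. Hence every $\tau^{(i)}$ is $\delta$-robustly satisfying for all but an $o(1)$ fraction of constraints of $\Phi_0$, i.e. it is a genuine ``approximate'' solution of $\Phi_0$. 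Moreover consecutive $\tau^{(i)}$ differ in at most one coordinate (single-bit changes upstairs can flip at most one decoded block, and by choosing block sizes and decoding appropriately a flip requires $\Omega(\mathrm{blocklen})$ steps, during which the decoded value is unambiguous—this needs care). So we obtain an $\eps_0$-approximate reconfiguration of $\Phi_0$, contradicting the No-case of the starting RIH instance.

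The step I expect to be the main obstacle is step (1) together with the single-coordinate-change bookkeeping in step (2): making the reduction simultaneously \emph{sound} (so the decoded trajectory is legal and approximately-satisfying) and \emph{complete in a reconfiguration-aware way} (so a legal trajectory downstairs lifts to one upstairs with all intermediates exactly satisfying and changing one symbol at a time). This is exactly the tension the ``parallelization framework'' referenced in the abstract is meant to resolve: handle the $m$ circuits $\mathrm{Ckt}_C$ in parallel, amortize their proof rewrites, and use a shared-randomness / bundled-constraint structure so that the verifier's acceptance probability cleanly tracks the worst block rather than the average. I would set the PCPP parameters ($\delta$ fixed small, randomness $O(\log n)$ so $\Phi$ has polynomial size, $q$ queries) first, then choose the consistency-gadget alphabet size and block length as functions of $\delta,q,|\Sigma_0|$ to make both directions go through; the final alphabet $\Sigma$ is a constant depending only on these. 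The resulting soundness is $1-\eps$ as claimed, inheriting directly from the PCPP's soundness $1-\eps$, with the arity $q+1$ as explained. A routine check confirms $\RIH{1-\eps}{(q+1)}{\Sigma}$ is in $\PSPACE$, so the reduction yields $\PSPACE$-hardness.
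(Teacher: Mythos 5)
Your approach is genuinely different from the paper's, and unfortunately it has several gaps that I don't think can be patched without essentially abandoning the plan and adopting the paper's structure.

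\paragraph{What the paper actually does.}
The paper does \emph{not} amplify a pre-existing gap. It starts from the \emph{exact} problem $\RIH{1}{2}{\Sigma}$ (PSPACE-hard by \cite{gopalan2009connectivity}), keeps four encoded copies $x_1,\dots,x_4$ of the assignment plus a ``switch'' variable $\sfv\in[4]$, and builds four PCPP verifiers $\mathcal V_1,\dots,\mathcal V_4$, where $\mathcal V_i$ checks a \emph{single global circuit}: ``the three assignments decoded from $\{x_j\}_{j\ne i}$ all fully satisfy $\Pi$ and are pairwise Hamming-distance $\le 1$.'' A constraint is vacuously satisfied unless $\sfv$ names its verifier, so at any time only one PCPP is being enforced; this is what lets you modify the unqueried copy $x_{\sfv}$ and its fresh proof, step $\sfv$ forward, and so on---giving completeness with all intermediates \emph{fully} satisfying. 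The paper's new contribution is noticing that the four verifiers make \emph{identical} query patterns, so they can be stacked into one CSP over alphabet $\{0,1\}^4$ (parallelization), turning the naive soundness $1-\eps/4$ into the optimal $1-\eps$.

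\paragraph{Where your plan breaks.}
You start from the already-gapped $\RIH{\eps_0}{q_0}{\Sigma_0}$ and run one PCPP \emph{per constraint} of $\Phi_0$. This has two serious problems. First, soundness: in the No-case, some intermediate decoded $\tau^{(i)}$ violates a $(1-\eps_0)$ fraction of $\Phi_0$'s constraints. Even if each such violated constraint's PCPP rejects with the maximum probability $\eps$, the total fraction of $\Phi$'s constraints rejected is only about $(1-\eps_0)\cdot\eps$, i.e.\ the value stays as high as $\approx 1-\eps(1-\eps_0) > 1-\eps$. Per-constraint PCPPs \emph{dilute} the rejection probability over the many constraints, so the claimed soundness $1-\eps$ is simply not achievable this way; the paper gets the clean $1-\eps$ precisely because a single PCPP checks a \emph{global} condition. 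Second, proximity: a single violated constraint $C$ says nothing about $\delta$-farness of $\tau|_C$ — the restriction might be one bit away from satisfying $C$. You'd need to encode with an ECC so that farness is meaningful, but then flipping one bit of a codeword can change the decoded symbol, and your ``old $\parallel$ new $\parallel$ flag'' trick for lifting a reconfiguration step is a different mechanism than what's needed. The paper's four-copy structure with the vacuous switch $\sfv$ is exactly the device that makes completeness go through without any hand-waving: the copy not looked at by the active verifier (and the inactive proofs) can be rewritten freely while keeping the instance \emph{exactly} satisfied. Without that device, I don't see how your completeness argument keeps intermediates satisfying even approximately, let alone exactly.

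\paragraph{Summary.}
The approach you sketch is closer in spirit to PCP-based gap amplification for ordinary CSPs; it doesn't work here because (a) per-constraint PCPPs cannot give soundness $1-\eps$, only something like $1-\eps(1-\eps_0)$, and (b) the reconfiguration-lifting problem is not solved by your alphabet-expansion idea but by the four-copy-plus-switch gadget, whose parallelizability is the actual new insight in this paper.
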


In other words, up to an extra query (arity of CSP), the parameters of Gap CSP Reconfiguration and PCPPs are identical. Our work thus connects RIH to PCPP in a quantitatively strong way.
We remark that \cite{KM24} implicitly showed, starting from a PCPP with the above parameters, %\vg{Is this the case?} \xr{yes, although they didn't write down this connection explicitly}
$\RIH{1-\frac\eps4}{(q+1)}{\Sigma}$ is PSPACE-hard, which is similar to \Cref{thm:main} but has inferior parameters.
Indeed, our analysis builds on their proof and our tighter reduction relies on a parallelization trick, inspired by recent works in parameterized hardness of approximation \cite{LRSW23,GLRSW24,guruswami2024almost}. We remark that the idea of parallelization is fairly general, and it works as long as there are multiple PCPPs with the same structure of queries used by the verifier. Fortunately, this turns out to be the case for the proof of RIH in \cite{KM24}.

%This idea of parallelization is fairly general and also applies to the proofs of \cite{HO24}. See \Cref{sec:overview} for details.

To get a result for Gap $2$-CSP Reconfiguration, we note that \cite[Lemma 5.4]{Ohs24} gives a way to trade the arity of the CSP with the soundness gap. Combined with \Cref{thm:main}, this leads to the following corollary. 

\begin{restatable}{corollary}{cormain}
\label{cor:main}
    For some fixed constant $\delta>0$, if for every Boolean circuit of size $n$, there is a $q$-query PCPP construction with proximity parameter $\delta$, soundness $1-\varepsilon$ and randomness $O(\log n)$, then for some alphabet $\Sigma$ of constant size, $\RIH{1-\frac\eps{q+1}}{2}{\Sigma}$ is PSPACE-hard.
\end{restatable}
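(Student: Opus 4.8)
The plan is to derive \Cref{cor:main} from \Cref{thm:main} by invoking the arity-reduction lemma of Ohsaka, namely \cite[Lemma 5.4]{Ohs24}. First I would recall the precise form of that lemma: it takes a $\RIH{1-\eps'}{q'}{\Sigma'}$ instance and produces a $\RIH{1-\eps'/q'}{2}{\Sigma''}$ instance, where $\Sigma''$ is still of constant size whenever $\Sigma'$ and $q'$ are constants. The intuition behind such a reduction is the standard ``constraint-graph'' gadget: a single $q'$-ary constraint on variables $x_{i_1},\dots,x_{i_{q'}}$ is replaced by a fresh constraint variable that is supposed to hold the entire satisfying tuple, together with $q'$ binary consistency constraints checking that each coordinate of the tuple agrees with the corresponding original variable. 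An assignment that violates the original $q'$-ary constraint must violate at least one of the $q'$ binary checks, so a $1-\eps'$ fraction of satisfied $q'$-ary constraints translates (in the worst case) to a $1-\eps'/q'$ fraction of satisfied binary constraints, which is exactly the soundness loss appearing in the statement; moreover the reconfiguration structure is preserved because the constraint variables can be updated coordinate-by-coordinate in lockstep with the original ones, so the Yes case and No case both carry over.

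The main steps, in order, are: (1) apply \Cref{thm:main} to the hypothesized $q$-query PCPP to obtain PSPACE-hardness of $\RIH{1-\eps}{(q+1)}{\Sigma}$ for some constant alphabet $\Sigma$; (2) feed this instance into \cite[Lemma 5.4]{Ohs24} with $q' = q+1$ and $\eps' = \eps$, which yields a polynomial-time reduction to $\RIH{1-\eps/(q+1)}{2}{\Sigma''}$ for some constant alphabet $\Sigma''$; (3) conclude PSPACE-hardness of $\RIH{1-\frac{\eps}{q+1}}{2}{\Sigma''}$ by composition of reductions, renaming $\Sigma''$ to $\Sigma$. One should also check the trivial PSPACE membership direction, but that is already noted in the footnote to \Cref{def:intro_gapreconfig} (brute-force search over reconfiguration sequences), so it requires no new argument.

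I do not expect any serious obstacle here: the corollary is essentially a two-line composition once \Cref{thm:main} is in hand, and the only thing to verify carefully is that the soundness bookkeeping in \cite[Lemma 5.4]{Ohs24} indeed gives the $1/(q+1)$ factor rather than something slightly different (e.g.\ an additive rather than multiplicative loss, or a dependence on $|\Sigma|$), and that the alphabet blowup is genuinely bounded by a constant depending only on $q$ and $|\Sigma|$. If the cited lemma is stated for a fixed arity or fixed gap, a mild restatement or a direct reproduction of its short gadget argument would suffice; the reconfiguration-sequence compatibility (that intermediate assignments in the reduced instance correspond to intermediate assignments in the original, up to the obvious coordinate-expansion) is the only place where one must be a little careful, but it is routine and already implicit in \cite{Ohs24}.
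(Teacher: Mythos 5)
Your proposal is correct and matches the paper's own (implicit) proof exactly: \Cref{cor:main} is obtained by applying \Cref{thm:main} to get PSPACE-hardness of $\RIH{1-\eps}{(q+1)}{\Sigma}$ and then invoking \cite[Lemma~5.4]{Ohs24} with $q' = q+1$ to trade arity for a multiplicative $1/(q+1)$ soundness loss, landing at $\RIH{1-\frac{\eps}{q+1}}{2}{\Sigma'}$. The extra intuition you give about the variable-clause gadget and the reconfiguration-compatibility check is accurate and consistent with how Ohsaka's lemma works, though the paper simply cites the lemma without unpacking it.
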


%\vg{The reader might wonder if the above leads to better overall soundness that \cite{Ohs24}. We should address this.} \xr{Added a sentence in the next paragraph after we introduce the situation for PCPP.}

Due to the lack of PCPP constructions with explicit constant query complexity and soundness, we are not yet able to use \Cref{cor:main} to improve the 0.9942 soundness for Gap 2-CSP Reconfiguration in \cite{Ohs24}. %\vg{Is the ad hoc approach in \cite{Ohs24} is better?} \xr{I fail to understand this - what do you mean? \cite{Ohs24} uses gap amplification followed by alphabet reduction, and gets stuck on 0.9942 for some reason.} which, once improved, would also lead to a better factor for the PSPACE-hardness of approximating {\sc Set-Cover} Reconfiguration \cite{Ohs24}.
We view our \Cref{thm:main} and \Cref{cor:main} as another concrete motivation to understand the soundness-query tradeoff of PCPPs.

Finally, we briefly explain the PCPPs needed in \Cref{thm:main} and defer its formal definition to \Cref{def:pcpp}. 
Let $\mathcal C\colon\{0,1\}^n\to\{0,1\}$ be a Boolean function, given by its circuit representation.
Intuitively, the PCPP is a verifier $\mathcal V$ that efficiently decides, given access to an auxiliary proof $\pi\in\{0,1\}^*$ %\kewen{To Xuandi: check this. Do we need to fix proof to be binary?}, 
\footnote{Our reduction also works if the PCPP has a constant-sized proof alphabet other than $\{0,1\}$. For ease of presentation and consistency, we assume the PCPPs use binary proofs.},
if a particular input $x\in\{0,1\}^n$ is a solution of $\mathcal C$ or $x$ is far from any solution of $\mathcal C$.
The \emph{query complexity} is the maximum number of bits that $\mathcal V$ reads on $(x,\pi)$; the \emph{proximity parameter} is the distance of $x$ to solutions; the \emph{soundness} is the acceptance probability given $x$ is far from solutions; and the \emph{randomness} is the number of random coins used in $\mathcal V$.
PCPP is implicit in the low-degree testers \cite{arora1998probabilistic,arora1998proof} that utilize auxiliary oracles to test if a given function is close to a low-degree polynomial. It is then explicitly defined in \cite{BGHSV06}, and concurrently in \cite{dinur2006assignment} with a different name Assignment Tester. We refer interested readers to Harsha's thesis \cite{Harsha2004Robust} for an overview of PCPP's history and the use of PCPP to reprove the PCP theorem with a clean modular version of proof composition.

%\vg{Actually PCPP wasn't defined till \cite{ben2004robust} and also more or less concurrently by Dinur-Reingold~\cite{dinur2006assignment} who called it Assignment Tester. Also we should cite the SICOMP journal version of \cite{ben2004robust}. We can also cite Prahladh's thesis for PCPP since it gave a re-proof of the PCP theorem using robust PCPP to do composition more modularly and seamlessly.}
%\xr{fixed.}

%Since then, the constructions of PCPPs have been optimized and generalized \cite{ben2004robust,dinur2006assignment,dinur2007pcp}.

While the PCP theorem admits verifiers of subconstant soundness and two queries~\cite{moshkovitz2008two}, we are not aware of %\kewen{Check this}
explicit tradeoffs between soundness and query complexity for PCPPs.%\footnote{Along this thread, we remark that the algorithm designed in \cite[Theorem 6]{KM24} rules out $2$-query PCPP with soundness below $1/2$. \kewen{Check this} \xr{Using our \Cref{cor:main}, this only rules out 1-query PCPPs with soundness 0, which is trivial.}}

%\xr{remark somewhere that our construction does not only apply to \cite{KM24}, but as long as the proof uses multiple pcpps of the same structure (which is often the case in RIH).}

\paragraph{Organization of the Paper.}
In \Cref{sec:prelim}, we define necessary notations.
In \Cref{sec:construction}, we describe our construction and give a formal proof of \Cref{thm:main}.
\section{Preliminaries}\label{sec:prelim}

We use $[n]$ to denote the set $\{1,2,\ldots,n\}$ for any positive integer $n$.

Below we first give the definition of Probabilistic Checkable Proof of Proximity (PCPP).

\begin{definition}[Probabilistic Checkable Proof of Proximity (PCPP)]\label{def:pcpp}
    For integers $n,m,r,q \in \mathbb N$, an $(n,m,r,q)$-PCPP verifier for a circuit $\mathcal C:\{0,1\}^n \to \{0,1\}$ with proximity parameter $\delta>0$ and soundness $\kappa>0$ is a probabilistic polynomial-time algorithm\footnote{In traditional definitions $\mathcal C$ is also given as the input to $\mathcal V$, but for the ease of defining parallelization later, we define $\mathcal V$ to be the verifier \textit{after} seeing $\mathcal C$ as the input.} $\mathcal V$ such that given oracle access to an assignment $x \in \{0,1\}^n$ and an auxiliary proof $\pi \in \{0,1\}^m$, the following holds:
    \begin{itemize}
        \item $\mathcal V$ tosses $r$ random coins, then makes $q$ queries on $x \circ \pi$ and decides to accept or reject.
        \item If $x$ is a solution of $\mathcal C$, then there exists some $\pi$ such that $\mathcal V$ always accepts.
        \item If $x$ is $\delta$-far from any solution of $\mathcal C$, then for every $\pi$, $\mathcal V$ accepts $\pi$ with probability at most $\kappa$.
    \end{itemize}
\end{definition}

\begin{definition}[$q$-CSP$_{\Sigma}$]
\label{def:csp}
    A $q$-CSP$_{\Sigma}$ instance is a tuple $\Pi=(G=(V,E),\Sigma,C=\{c_e\}_{e \in E})$, where:
    \begin{itemize}
        \item $V$ is the set of variables.
        \item $\Sigma$ is the alphabet of each variable.
        \item $E$ and $C$ together define the constraints. Specifically, $E$ is a set of $q$-hyperedges on $V$, and for each $e=(u_1,\ldots,u_q)\in E$, there is a function $c_e:\Sigma^q \to \{0,1\}$ in $C$ describing the constraint. 
    \end{itemize}
    An assignment $\psi$ is a function from $V$ to $\Sigma$. The value of $\Pi$ under assignment $\psi$ is
    $$\val_{\Pi}(\psi)=\frac{1}{|E|} \sum_{e=(u_1,\ldots,u_q) \in E} c_e(\psi(u_1), \ldots, \psi(u_k)).$$
    %The value of $\Pi$ is defined to be $\val(\Pi)=\max_{\psi} \val_{\Pi}(\psi)$.
\end{definition}

\begin{definition}[Error Correcting Codes]
A binary error correcting code $C$ of message length $k$ and block length $n$ is a mapping from $\{0,1\}^k$ to $\{0,1\}^n$, such that for any different messages $m_1,m_2 \in \{0,1\}^k$, $C(m_1)$ and $C(m_2)$ have Hamming distance at least $d_C$ for some $d_C>0$ defined as the \textit{distance} of the code.

%The distance $d_C$ of the code is defined to be $\min_{c_1\neq c_2 \in C} \|c_1-c_2\|_0$. The relative distance $\delta_C$ of the code is defined to be $d_C/n$, and the rate is defined as $k/n$.
%\kewen{Is this ever formally used?} \xr{Shortened a little bit.}
\end{definition}

%Throughout this paper, we use error correcting codes with constant distance and inverse-polynomial rate \xr{If constant rate suffices, can just write constant here}. In order for the distance to possibly get close to 1, we relax the codeword alphabet from binary to an arbitrarily $\Sigma$ with constant size. In other words, we consider codes from $\{0,1\}^k$ to $\Sigma^{n'}$.

%\xr{Remove this and add construction of good codes if we use binary everywhere.}

\iffalse
\begin{definition}[Reconfiguration Sequence]
    \kewen{no need to define. just refer to \Cref{def:intro_gapreconfig}}
\end{definition}

\begin{definition}[Gap $q$-CSP Reconfiguration]
    \kewen{no need to define. just refer to \Cref{def:intro_gapreconfig}}
\end{definition}
\fi

We refer to \Cref{def:intro_gapreconfig} for the definition of $\RIH{\eps}{q}{\Sigma}$.
\section{Main Construction}\label{sec:construction}

In this section, we first review the proof in \cite{KM24} and introduce the idea of parallelization. Then we show how to apply parallelization to \cite{KM24}'s construction to get a better tradeoff between query complexity and soundness.

\subsection{Proof Overview}\label{sec:overview}

%\kewen{I suggest to merge \Cref{sec:overview}, \Cref{sec:prelim}, and \Cref{sec:construction}.}

\paragraph{An Overview of \cite{KM24}'s Proof.}
The starting point is $\RIH{1}{2}{\Sigma}$ problem for some constant-sized $\Sigma$, which is known to be PSPACE-hard \cite{gopalan2009connectivity,ohsaka2023gap}. Given such a 2-CSP$_{\Sigma}$ instance $\Pi$ on $n$ variables, they take a binary error correcting code $C:\{0,1\}^{n \log |\Sigma|} \to \{0,1\}^{m}$ to encode assignments of $\Pi$. 
To create a gap, they first construct $4m$ many binary variables, written as $x_1,\ldots,x_4 \in \{0,1\}^{m}$, where each $x_i$ is supposed to be the encoding of some assignment $y_i \in \Sigma^n$ of $\Pi$\footnote{The significance of 4 copies is that, after dropping any copy, it's still possible to do a majority vote among the rest 3 assignments.}. 
%\vg{Can we intuitively describe the significance of why 4 copies?} \xr{Added as a footnote to match footnote 8.}
Then, they build 4 PCPPs $\mathcal V_1,\ldots,\mathcal V_4$, where each $\mathcal V_i$ queries the three strings $\{x_j\}_{j \in [4] \setminus \{i\}}$ and an auxiliary proof $\pi_i\in\{0,1\}^{m'}$, then verifies the following:
\begin{itemize}
    \item For each $j \in [4] \setminus \{i\}$, the assignment $y_j$ decoded from $x_j$ satisfies $\Pi$.
    \item For each $j_1\neq j_2 \in [4] \setminus \{i\}$, $y_{j_1}$ and $y_{j_2}$ differ in at most one bit. Here we write $y_{j_1},y_{j_2}$ as binary vectors in $\{0,1\}^{n \log |\Sigma|}$. The binary alphabet ensures that, among the three strings $\{y_j\}_{j \in [4] \setminus \{i\}}$ which have pairwise distance at most 1, two of them must be equal. This helps to do a majority vote of $\{y_j\}_{j \in [4] \setminus \{i\}}$, as we will see in the soundness analysis.
    %\vg{Should we say so some two should be equal? This requires $y$ to be Boolean.} \xr{Added.}
\end{itemize}

By the definition of PCPPs, if the above two conditions hold, then there is a proof $\pi_i$ which always makes $\mathcal V_i$ accept; otherwise, for every proof $\pi_i$, $\mathcal V_i$ accepts with probability at most $1-\varepsilon$. Their final CSP construction is the combination of the 4 PCPPs, together with a variable $\sfv\in[4]$ indicating which PCPP is currently being checked. Specifically, for each $q$-ary constraint $\varphi$ in some $\mathcal V_i$, they add $\sfv$ to $\varphi$ so it becomes a $(q+1)$-ary constraint $\varphi'$. 
Then $\varphi'$ is satisfied iff $\sfv\neq i$ (i.e., we are not checking $\mathcal V_i$),\footnote{We slightly overload notation by using $\sfv$ to also denote the value it takes.} or $\sfv=i$ and $\varphi$ is satisfied (i.e., we are checking $\mathcal V_i$ and the check is passed).

For the completeness, they rely on the fact that at any time we only visit 3 $x_j$'s and 1 $\pi_j$, namely, $\{x_j\}_{j \in [4] \setminus \{\sfv\}}$ and $\pi_{\sfv}$. 
This gives us the flexibility to change $x_{\sfv}$ to an adjacent assignment (and $\{\pi_j\}_{j \in [4] \setminus \{\sfv\}}$ accordingly) while keeping the instance fully satisfied. 

For the soundness, suppose we have a reconfiguration sequence $\sigma^{(0)},\ldots,\sigma^{(t)}$ of the final instance, where each $\sigma^{(k)}$ consists of $\sfv^{(k)},\{x_j^{(k)}\}_{j\in[4]},\{\pi_j^{(k)}\}_{j\in[4]}$.
Since every $\sigma^{(k)}$ has value at least $1-\varepsilon$, we are able to extract an assignment\footnote{While the construction of $z^{(k)}$ is not important for our purpose, it is not hard to describe: $z^{(k)}$ is the majority vote of the three assignments $\{y_j^{(k)}\}_{j \in [4] \setminus \{\sfv^{(k)}\}}$, where $y_j^{(k)} \in \{0,1\}^{n\log|\Sigma|}$ is decoded from $x_j^{(k)}$. The distance property is guaranteed by the PCPP verifier $\mathcal V_{\sfv^{(k)}}$.} $z^{(k)}\in\{0,1\}^{n\log|\Sigma|}$ from $\sigma^{(k)}$, such that every $z^{(k)}$ fully satisfies $\Pi$ and differs with $z^{(k+1)}$ in at most one coordinate, which is a valid reconfiguration sequence for the original problem $\Pi$.

In the above construction, a $q$-query $(1-\eps)$-soundness PCPP leads\footnote{The proximity parameter of the PCPP is not very important: it just needs to be smaller than half of the relative distance of the error correcting code for unique decoding.} to the PSPACE-hardness of $\RIH{1-\frac{\varepsilon}{4}}{(q+1)}{}$, where the one extra query is on $\sfv$, and the soundness $1-\frac{\varepsilon}{4}$ comes from putting together the 4 PCPPs and letting 3 of them pass for free.

\paragraph{Parallelization.}
Now we discuss our tweak on top of the above analysis.
The key observation is that the 4 PCPPs in \cite{KM24}'s construction have the same structure. In other words, they are the same verifiers, albeit applied on different strings. Specifically, for each randomness $\omega \in \{0,1\}^r$, there is a subset of indices $I_\omega$, such that for each $i \in [4]$, the PCPP verifier $\mathcal V_i$ queries the subset $I_\omega$ on the string $x_{j_1} \circ x_{j_2} \circ x_{j_3} \circ \pi_i$, where $\{j_1,j_2,j_3\}=[4] \setminus \{i\}$.

Given this observation, we ``parallelize'' the 4 PCPPs in a natural way. We stack $x_1,\ldots,x_4\in \{0,1\}^{m}$ (and $\pi_1,\ldots,\pi_4 \in \{0,1\}^{m'}$ respectively) into 4 layers, and interpret it as a string $x$ of length $m$ (a string $\pi$ of length $m'$, respectively) over alphabet $\{0,1\}^4$. By querying a subset $I$ in $(x \circ \pi)$, we get the information on $(x_j \circ \pi_j)_I$ for every $j \in [4]$. 
After that, we perform the test in $\mathcal V_i$, which involves $x_{j_1}\circ x_{j_2} \circ x_{j_3} \circ \pi_i$ where $\{j_1,j_2,j_3\}=[4] \setminus \{i\}$. 
Suppose there is a PCPP construction with arity $q$ and soundness $1-\varepsilon$, applying our parallelization trick on \cite{KM24}'s construction results in the PSPACE-hardness of $\RIH{1-\varepsilon}{(q+1)}{}$ over the alphabet $\{0,1\}^4$, where the one extra arity comes from $\sfv$ as in \cite{KM24}. 

This idea of parallelization is inspired by recent works in parameterized hardness of approximation \cite{LRSW23,GLRSW24,guruswami2024almost}. In the world of parameterized complexity, we often encounter problems involving few variables but large alphabet size. 
The large alphabet size prevents direct applications of standard tools from the NP world, which typically requires a constant-sized alphabet. However, sometimes the problem is structured in a way that we can associate $[n]$ as $\Sigma^{\log _{|\Sigma|} n}$ for a constant-sized $\Sigma$, and view the constraints as acting in parallel across the $\log_{|\Sigma|} n$ layers of the variables. This allows us to apply known techniques, such as error correcting codes and PCPPs, in parallel to each layer, and verify the constraints across all layers simultaneously.

\paragraph{Discussion on \cite{HO24}.}

We remark that Hirahara and Ohsaka \cite{HO24} independently and concurrently proved RIH, using a different construction from that of Karthik‑Manurangsi \cite{KM24}. Although PCPP is also used, the verifier in \cite{HO24} consists of several (a constant number of) stages. The query complexity is thus the sum of query complexity in each stage, and the soundness is a bit hard for us to formulate into a clean form due to the case analysis involved.
%\vg{Is the recent ICALP paper out? Should we mention or cite it?} \xr{It's still not out yet..}

\subsection{Proof of Theorem \ref{thm:main}}

\begin{definition}[Parallelizable PCPPs]
    For any constant $t\ge 1$, we say a set of $t$ $(n,m,r,q)$-PCPPs $\mathcal V_1,\ldots,\mathcal V_t$ (on circuits $\mathcal C_1,\ldots,\mathcal C_t$ and with inputs $x^{(1)}\circ \pi^{(1)}, \ldots, x^{(t)} \circ \pi^{(t)}$, respectively) are parallelizable if for every randomness in $\{0,1\}^r$, $\mathcal V_1,\ldots,\mathcal V_t$ query the same set of $q$ locations in their respective $x^{(i)} \circ \pi^{(i)}$.
\end{definition}

We think of the input of the $t$ PCPPs as written in a $t \times (n+m)$ table. Henceforth, we slightly overload notation by using $x$ and $\pi$ to denote \textit{unassigned variables} instead of \textit{strings}. Specifically:
\begin{itemize}
    \item We use $x_1,\ldots,x_n,\pi_1,\ldots,\pi_m$ to denote the variables on each column, which are supposed to take values in $\{0,1\}^t$. We abbreviate variables $(x_1, \ldots, x_n)$ and $(\pi_1, \ldots, \pi_m)$ as $x$ and $\pi$, respectively.
    
    \item We use $x^{(1)},\ldots,x^{(t)}, \pi^{(1)},\ldots,\pi^{(t)}$ to denote the variables on each row, where $x^{(i)}$ is supposed to take values in $\{0,1\}^n$ and $\pi^{(i)}$ is supposed to take values in $\{0,1\}^m$.
\end{itemize}

%We use $x^{(1)},\ldots,x^{(t)} \in \Sigma^n$ (respectively, $\pi^{(1)},\ldots,\pi^{(t)} \in \Sigma^m$) to denote the string on each row of the assignment part (respectively, proof part), and use $x_1,\ldots,x_n,\pi_1,\ldots,\pi_m \in \Sigma^t$ to denote the string on each column. The row view is often considered when thinking of $t$ separate PCPPs, while the column view is considered when they are parallelizable and we want to combine them into a single PCPP with large alphabet $\Sigma^t$. 

%\xr{Add a picture here.}

%\vg{Below we are using query for CSP instead of arity; good to be consistent, or warn reader that we might mix up the two.} \xr{Fixed, and replaced every ``arity-$q$'' with ``$q$-ary''.}
\begin{theorem}
\label{thm:pcpp2csp}
    For any constant $t\ge 1$, let $\mathcal V_1,\ldots,\mathcal V_t$ be $t$ parallelizable $(n,m,r,q)$-PCPPs (with respect to circuits $\mathcal C_1,\ldots,\mathcal C_t:\{0,1\}^n \to \{0,1\}$), there is a $(q+1)$-CSP instance $\Pi$ which can be built in polynomial time, such that:
    \begin{itemize}
        \item \textsc{Variables and Alphabet.} $\Pi$ has $n+m+1$ variables $\{\sfv\} \cup x \cup \pi$ over alphabet $\{0,1\}^t$.
        \item \textsc{Completeness.} Fix an assignment $\psi$ of $\Pi$. If for some $i \in [t]$, $\mathcal V_i$ accepts $\psi(x^{(i)}) \circ \psi(\pi^{(i)})$ with probability 1, then $\val_\Pi(\psi)=1$.
        \item \textsc{Soundness.} Fix an assignment $\psi$ of $\Pi$. If for every $i \in [t]$, $\mathcal V_i$ accepts $\psi(x^{(i)}) \circ \psi(\pi^{(i)})$ with probability at most $\kappa$, then $\val_{\Pi}(\psi) \le \kappa$.
    \end{itemize}
\end{theorem}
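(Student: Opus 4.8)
\textbf{Proof plan for Theorem \ref{thm:pcpp2csp}.}

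The plan is to construct $\Pi$ by mimicking the single-PCPP-to-CSP reduction but layering the $t$ PCPPs on top of each other, using the control variable $\sfv \in [t]$ (encoded inside $\{0,1\}^t$, say via the first $\lceil \log_2 t\rceil$ bits of the alphabet symbol, or simply by declaring that only the symbols encoding $1,\ldots,t$ are ``legal'' for $\sfv$) to select which PCPP is currently being enforced. Concretely, for each randomness string $\omega \in \{0,1\}^r$, the parallelizability hypothesis gives a common set $S_\omega = \{p_1,\ldots,p_q\}$ of $q$ column-indices into $x\circ\pi$ that all of $\mathcal V_1,\ldots,\mathcal V_t$ query. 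For this $\omega$ I create one $(q+1)$-ary constraint $c_{\omega}$ on the variables $(\sfv, z_{p_1},\ldots,z_{p_q})$, where $z_{p_\ell}$ denotes the column variable ($x$- or $\pi$-type) at position $p_\ell$. Each $z_{p_\ell}$ carries a symbol in $\{0,1\}^t$ whose $i$-th bit is the bit that row $i$ (i.e.\ input $x^{(i)}\circ\pi^{(i)}$) holds at that location. The constraint $c_\omega$ reads the value $a \in [t]$ of $\sfv$; if $a$ is not a legal index it rejects (alternatively, accepts — either choice works as long as completeness has a legal value available, and for soundness we will take a legal $\sfv$; to be safe we make illegal $\sfv$ reject); if $a$ is legal, it extracts from $z_{p_1},\ldots,z_{p_q}$ the $q$ bits $(b_1,\ldots,b_q)$ that lie in row $a$ (the $a$-th coordinate of each $\{0,1\}^t$-symbol), and accepts iff $\mathcal V_a$ on randomness $\omega$, reading those $q$ bits at locations $S_\omega$, accepts. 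Since $\mathcal V_a$ is a polynomial-time algorithm and $t,q$ are constants, each $c_\omega$ is a well-defined constraint on $q+1$ variables over $\{0,1\}^t$, and there are $2^r = \mathrm{poly}(n)$ of them, so $\Pi$ is built in polynomial time with $n+m+1$ variables over alphabet $\{0,1\}^t$, as claimed.

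For completeness, suppose $\psi$ is an assignment with $\psi(\sfv)$ encoding some legal index $i\in[t]$ and such that $\mathcal V_i$ accepts $\psi(x^{(i)})\circ\psi(\pi^{(i)})$ with probability $1$ over its $r$ coins. Take any constraint $c_\omega$. Its value of $\sfv$ is $i$, which is legal, so $c_\omega$ extracts the row-$i$ bits of $\psi(z_{p_1}),\ldots,\psi(z_{p_q})$ — but these are exactly the bits $(\psi(x^{(i)})\circ\psi(\pi^{(i)}))$ at locations $S_\omega$, which is precisely what $\mathcal V_i$ reads on randomness $\omega$. Since $\mathcal V_i$ accepts on every $\omega$, every $c_\omega$ is satisfied, hence $\val_\Pi(\psi)=1$. (Note the theorem statement's completeness hypothesis only asserts existence of such an $i$; implicitly $\psi(\sfv)$ is assumed to encode that $i$, which is the natural reading and matches how the reduction is used downstream.)

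For soundness, fix any $\psi$ with the property that for every $i\in[t]$, $\mathcal V_i$ accepts $\psi(x^{(i)})\circ\psi(\pi^{(i)})$ with probability at most $\kappa$. Let $a = \psi(\sfv)$. If $a$ is an illegal index, all constraints reject and $\val_\Pi(\psi)=0\le\kappa$. If $a\in[t]$ is legal, then by construction $c_\omega$ is satisfied under $\psi$ iff $\mathcal V_a$ accepts on randomness $\omega$ when reading the row-$a$ bits of $\psi(z)_{S_\omega}$, i.e.\ iff $\mathcal V_a$ accepts $\psi(x^{(a)})\circ\psi(\pi^{(a)})$ on coins $\omega$. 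Since the $2^r$ constraints of $\Pi$ correspond bijectively to the $2^r$ coin tosses of each verifier with uniform weight, $\val_\Pi(\psi)$ equals the acceptance probability of $\mathcal V_a$ on $\psi(x^{(a)})\circ\psi(\pi^{(a)})$, which is at most $\kappa$ by hypothesis. This proves the soundness clause.

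The only genuinely delicate point — and the step I expect to need the most care — is the bookkeeping that the ``common query set $S_\omega$'' provided by parallelizability is indeed a set of \emph{column} indices (positions in $\{1,\ldots,n+m\}$) rather than entries of the $t\times(n+m)$ table, so that a single $\{0,1\}^t$-valued column variable carries exactly the bit each of the $t$ rows needs at that position; this is exactly where the hypothesis ``$\mathcal V_1,\ldots,\mathcal V_t$ query the same set of $q$ locations in their respective $x^{(i)}\circ\pi^{(i)}$'' is used, and it is what lets us fold $t$ separate $q$-query tests into $q$ shared variables plus the one selector $\sfv$. Everything else — that constraints are functions $\{0,1\}^{t(q+1)}\to\{0,1\}$ computable in polynomial time, that the count of constraints is $\mathrm{poly}(n)$, and that the uniform distribution over constraints matches the verifier's coin distribution — is routine.
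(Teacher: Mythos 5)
Your construction, completeness argument, and soundness argument match the paper's construction exactly (the paper only states the constraint and calls the completeness and soundness ``easy to check'', which you flesh out), so this is essentially the same proof. The one wrinkle is your parenthetical claim that ``either choice works'' for illegal $\sfv$ values: making illegal $\sfv$ accept would in fact break the soundness clause, since the soundness hypothesis quantifies over \emph{all} assignments $\psi$ and an adversary could simply set $\psi(\sfv)$ to an illegal symbol to force $\val_\Pi(\psi)=1$; fortunately you then make the correct choice (reject on illegal $\sfv$), and your soundness analysis does correctly cover that branch, so the proof as written is fine.
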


\begin{proof}
    $\Pi$ is constructed by simply ``parallelizing'' the $t$ PCPP verifiers. Specifically, for each randomness in $\{0,1\}^r$, let $I$ be the size-$q$ local window that every $\mathcal V_i$ queries on their respective $x \circ \pi$. We add a $(q+1)$-ary constraint $c$ to $\Pi$ as follows:
    \begin{itemize}
        \item $c$ is on variables $\{\sfv\} \cup (x \circ \pi)_I$.
        \item $c$ is satisfied if and only if:
        \begin{itemize}
            \item $\sfv$ takes value in $[t]$ (We treat $\{0,1\}^t$, the alphabet of $\sfv$, as a super-set of $[t]$).
            \item Suppose $\sfv$ takes value $i \in [t]$, then $\mathcal V_i$ accepts the assignment on $(x^{(i)} \circ \pi^{(i)})_I$.
        \end{itemize}
    \end{itemize}
    The completeness and soundness are easy to check.
\end{proof}

We refer to \Cref{fig:1} for an illustration of our $t$-Parallel PCPP construction.

\begin{figure}[htbp]
  \centering
  \includegraphics[width=0.8\textwidth]{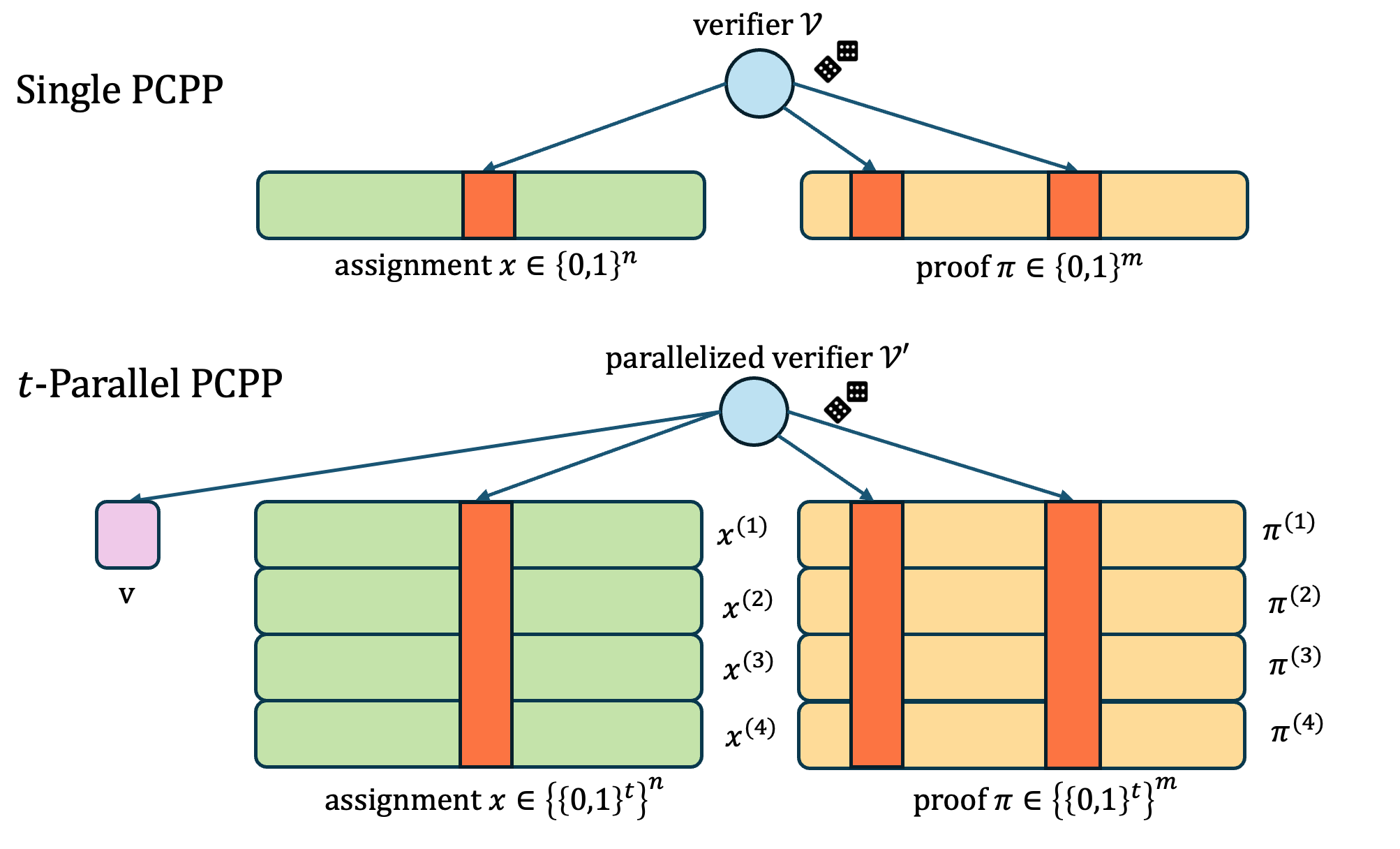}
  \caption{An illustration of the $t$-Parallel PCPP construction ($t=4$).}
  \label{fig:1}
\end{figure}

\begin{definition}[Gap $t$-Parallel $q$-PCPP$_{\Sigma}$ Reconfiguration]
    For any constant $t \ge 1$ and any $0 \le s \le c \le 1$, Gap$_{c,s}$ $t$-Parallel $q$-PCPP Reconfiguration asks, for $t$ parallelizable $(n,m,r,q)$-PCPPs $\mathcal V_1,\ldots,\mathcal V_t$ and their two assignments $\psi^{\ini},\psi^{\tar}:
    (x \cup \pi) \to \{0,1\}^t$, to distinguish between the following two cases:
    \begin{itemize}
        \item \textsc{Yes Case.} There is a reconfiguration sequence $\Psi=(\psi^{\ini},\ldots,\psi^\tar)$, such that for every $\psi \in \Psi$, there exists $i \in [t]$ such that $\mathcal V_i$ accepts $\psi(x^{(i)})\circ \psi(\pi^{(i)})$ with probability at least $c$.
        \item \textsc{No Case.} For any reconfiguration sequence $\Psi=(\psi^\ini,\ldots,\psi^\tar)$, there exists $\psi \in \Psi$, such that for all $i \in [t]$, $\mathcal V_i$ accepts $\psi(x^{(i)}) \circ \psi(\pi^{(i)})$ with probability less than $s$.
    \end{itemize}
\end{definition}

Applying our construction in \Cref{thm:pcpp2csp}% and noticing the fact that parallelization preserves reconfigurability (i.e., if two assignments $\psi,\psi$ differ on only one \textit{entry} in the matrix, then they also differ on only one \textit{column} of the matrix), 
, we have the following corollary:

\begin{corollary}
    %[Reduction from Gap $t$-Parallel PCPP Reconfiguration to Gap 2-CSP Reconfiguration]
\label{cor:pcpp-to-csp}
    For any constant $t \ge 1$ and any $0 \le s \le 1$, Gap$_{1,s}$ $t$-Parallel $q$-PCPP Reconfiguration reduces to Gap$_{1,s}$ $(q+1)$-CSP$_\Sigma$ Reconfiguration in polynomial time, where $\Sigma=\{0,1\}^t$.
\end{corollary}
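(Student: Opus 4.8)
The plan is to prove \Cref{cor:pcpp-to-csp} by instantiating \Cref{thm:pcpp2csp} with the circuits coming from a Gap $t$-Parallel $q$-PCPP Reconfiguration instance, and then checking that the reduction maps the two cases of the reconfiguration problem to the two cases of Gap CSP Reconfiguration. Concretely, given an instance of Gap$_{1,s}$ $t$-Parallel $q$-PCPP Reconfiguration specified by parallelizable $(n,m,r,q)$-PCPPs $\mathcal V_1,\ldots,\mathcal V_t$ (with underlying circuits $\mathcal C_1,\ldots,\mathcal C_t$) and boundary assignments $\psi^{\ini},\psi^{\tar}$, I would first run the polynomial-time construction of \Cref{thm:pcpp2csp} to obtain the $(q+1)$-CSP$_\Sigma$ instance $\Pi$ over $\Sigma=\{0,1\}^t$ with variable set $\{\sfv\}\cup x\cup \pi$. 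The only thing missing is the pair of boundary solutions for $\Pi$: I would extend $\psi^{\ini}$ and $\psi^{\tar}$ to assignments of $\Pi$ by fixing $\sfv$ to some canonical value (say $1\in[t]$) in both, i.e., set $\widehat\psi^{\ini} := \psi^{\ini}\cup\{\sfv\mapsto 1\}$ and $\widehat\psi^{\tar} := \psi^{\tar}\cup\{\sfv\mapsto 1\}$.

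The next step is to verify that $\widehat\psi^{\ini}$ and $\widehat\psi^{\tar}$ are genuine solutions of $\Pi$, which is required for them to be valid endpoints of a reconfiguration sequence. This is where the definition of Gap $t$-Parallel $q$-PCPP Reconfiguration is used in its "with probability at least $c=1$" form applied to the endpoints: since $\psi^{\ini}$ (resp.\ $\psi^{\tar}$) is a legal configuration, there is some $i\in[t]$ for which $\mathcal V_i$ accepts $\psi^{\ini}(x^{(i)})\circ\psi^{\ini}(\pi^{(i)})$ with probability $1$; however, to be careful, I need the endpoints of the Yes-case sequence to be solutions, and indeed by definition the initial and target assignments are themselves members of the sequence $\Psi$, so they satisfy this property. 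By the \textsc{Completeness} clause of \Cref{thm:pcpp2csp}, $\val_\Pi(\widehat\psi^{\ini})=1$ provided $\sfv$ is set to such an $i$; so one subtlety is that the canonical value of $\sfv$ at the endpoints must be chosen to be an accepting index, and I should set $\sfv$ at the endpoints to be $i^{\ini}$ and $i^{\tar}$ respectively (these may differ, but that is fine — reconfiguration allows changing $\sfv$ along the way).

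The heart of the argument is the two-way correspondence between reconfiguration sequences of $\Pi$ and sequences of assignments of $x\cup\pi$ in the PCPP world. In the \textsc{Yes} direction: given a sequence $\Psi=(\psi^{\ini},\ldots,\psi^{\tar})$ of $(x\cup\pi)$-assignments with consecutive Hamming distance $\le 1$ and each $\psi$ accepted by some $\mathcal V_{i(\psi)}$ with probability $\ge 1$, I build the $\Pi$-sequence by inserting, between consecutive $\psi,\psi'$, an intermediate step that first flips $\sfv$ from $i(\psi)$ to $i(\psi')$ (one coordinate change, and the resulting assignment still has value $1$ because $i(\psi')$ is accepting for $\psi$ if... no — this is the genuinely delicate point), then changes the one $(x\cup\pi)$-coordinate. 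The clean fix, which matches the completeness statement, is: first change $\sfv$ to $i(\psi')$ while the data is still $\psi$ — but we need $\mathcal V_{i(\psi')}$ to accept $\psi$ with probability $1$, which need not hold. The correct resolution is to note that in the Yes case of $t$-Parallel PCPP Reconfiguration we may assume (or it is part of the definition to massage) that consecutive configurations share a common accepting index, OR more robustly, interleave: keep $\sfv=i(\psi)$, and observe that to change one data coordinate and then re-point $\sfv$ we need $i(\psi)$ to remain accepting for $\psi'$; failing that we must pass through a configuration accepted by \emph{both} indices. I expect this endpoint/pointer-synchronization issue to be the main obstacle, and I would handle it by strengthening the intermediate argument: since PCPP completeness is "probability exactly $1$", and consecutive $\psi,\psi'$ differ in one coordinate, I claim one can always route $\sfv$ through a value that is accepting for the current data — in the worst case, first move the data coordinate while $\sfv$ points to an index accepting $\psi'$, which exists, giving value $1$ at every step of $\Pi$'s sequence. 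In the \textsc{No} direction, the contrapositive is immediate from the \textsc{Soundness} clause of \Cref{thm:pcpp2csp}: if $\Pi$ had a reconfiguration sequence in which every assignment has value $\ge s$, then projecting out the $\sfv$-coordinate yields a sequence of $(x\cup\pi)$-assignments with consecutive Hamming distance $\le 1$ (changing $\sfv$ only is a no-op on the projection, changing one data coordinate is one data coordinate) such that for each projected $\psi$, letting $i=\sfv$ at that step, $\mathcal V_i$ accepts $\psi(x^{(i)})\circ\psi(\pi^{(i)})$ with probability $\ge s$ — contradicting the No case — so every valid $\Pi$-sequence contains an assignment of value $<s$, establishing that the reduction sends No instances to No instances. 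Finally I would note the reduction is polynomial-time since \Cref{thm:pcpp2csp} is and the endpoint augmentation is trivial, completing the proof.
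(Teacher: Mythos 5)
The paper gives no explicit proof of \Cref{cor:pcpp-to-csp}; it is asserted as an immediate consequence of \Cref{thm:pcpp2csp}, so there is nothing to compare against line by line. Structurally your reduction is the intended one: run the construction of \Cref{thm:pcpp2csp} and append a value of $\sfv$ to the two boundary assignments. Your No-direction argument is correct and is exactly the soundness clause of \Cref{thm:pcpp2csp} applied to each assignment in a hypothetical high-value $\Pi$-sequence after projecting out $\sfv$.

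The Yes direction, however, contains a real gap that you flag but do not actually close. Given a PCPP-level Yes sequence $\Psi=(\psi_0,\ldots,\psi_T)$ with per-step accepting indices $i_0,\ldots,i_T$, you must lift it to a $\Pi$-sequence that also tracks $\sfv$. When $\psi_k\to\psi_{k+1}$ changes one data coordinate and $i_k\neq i_{k+1}$, the only one-coordinate intermediates are $(\psi_k,i_{k+1})$ and $(\psi_{k+1},i_k)$, and nothing in the Yes-case definition of Gap $t$-Parallel $q$-PCPP Reconfiguration guarantees that either has $\Pi$-value $1$. Your proposed fix --- ``first move the data coordinate while $\sfv$ points to an index accepting $\psi'$'' --- presupposes you already stand at $(\psi_k,i_{k+1})$; but moving there from $(\psi_k,i_k)$ requires $\mathcal V_{i_{k+1}}$ to accept $\psi_k$ with probability $1$, which is precisely what may fail. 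So the ``resolution'' restates the obstacle rather than removing it. Concretely, with $t=2$ and $\psi_0\sim\psi_1\sim\psi_2$ accepted only by $\mathcal V_1,\mathcal V_2,\mathcal V_1$ respectively, the natural lift has no valid intermediate $\Pi$-assignment at either junction. To close the gap one either needs a strengthened Yes-case (consecutive $\psi$'s share a common accepting index), or one should argue that the specific instances supplied by \Cref{thm:km24} enjoy that extra structure --- which they do in \cite{KM24}, where $\sfv$ is switched only at moments when both the old and new verifier accept with probability $1$. As written, your proof (and, arguably, the unproved corollary itself, if read for arbitrary instances) leaves this synchronization step unresolved.
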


\begin{remark}\label{rmk:pcpp_alphabet}
    The reduction works even if the PCPPs use a constant-sized proof alphabet $\Sigma$ other than $\{0,1\}$. In this case, the resulting Gap$_{1,s}$ $(q+1)$-CSP$_{\Sigma'}$ Reconfiguration has alphabet $\Sigma'=\Sigma^t$. For ease of presentation and consistency, we assume the PCPPs use binary alphabets.
\end{remark}

%\vg{Probably the below should be a theorem?} \xr{Fixed.}
\begin{theorem}[Implicit in \cite{KM24}]
\label{thm:km24}
    For some fixed constant $\delta>0$, if for every Boolean circuit of size $n$, there is a $q$-query PCPP construction with proximity parameter $\delta$, soundness $1-\varepsilon$ and randomness $O(\log n)$. Then Gap$_{1,1-\varepsilon}$ $4$-Parallel $q$-PCPP$_{\Sigma}$ Reconfiguration is PSPACE-hard.
\end{theorem}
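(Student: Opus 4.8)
\textbf{Proof plan for Theorem \ref{thm:km24}.}
The plan is to re-run the Karthik--Manurangsi reduction from the excerpt's proof overview, but to package its output using the parallelizable-PCPP formalism of \Cref{cor:pcpp-to-csp} rather than immediately collapsing the four PCPPs into a single CSP. We start from the PSPACE-hardness of $\RIH{1}{2}{\Sigma_0}$ for some constant-sized $\Sigma_0$ \cite{gopalan2009connectivity,ohsaka2023gap}, take its instance $\Pi$ on $n$ variables, and fix a binary error-correcting code $C\colon\{0,1\}^{n\log|\Sigma_0|}\to\{0,1\}^m$ with constant relative distance $d_C$, chosen so that $\delta < d_C/2$ (this is what makes the proximity parameter $\delta$ usable for unique decoding). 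For each $i\in[4]$ we build the circuit $\mathcal C_i$ that, on input the three strings $\{x_j\}_{j\in[4]\setminus\{i\}}$, checks that each $x_j$ is a codeword $C(y_j)$ whose decoded assignment $y_j\in\Sigma_0^n$ satisfies $\Pi$, and that each pair $y_{j_1},y_{j_2}$ differs in at most one bit (as binary vectors). Applying the hypothesized $q$-query PCPP of proximity $\delta$, soundness $1-\varepsilon$, randomness $O(\log n)$ to each $\mathcal C_i$ gives verifiers $\mathcal V_1,\dots,\mathcal V_4$; the crucial structural point, which I would verify carefully, is that because the $\mathcal C_i$ are literally the same circuit up to which three of the four input blocks it reads, the four PCPP verifiers can be taken to query, for each fixed randomness string, the \emph{same} set of $q$ locations in their respective input $x^{(i)}\circ\pi^{(i)}$ — i.e.\ they are parallelizable in the sense of the definition preceding \Cref{thm:pcpp2csp}. (If the generic PCPP machinery does not automatically produce syntactically identical query patterns, one pads/reindexes the proof strings so that it does; this is the one place needing care.)

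Next I would set up the reconfiguration instance. The two endpoints come from the two solutions $\sigma,\sigma'$ of $\Pi$: let $\psi^{\ini}$ put the codeword $C(\sigma)$ in all four layers of $x$ (and the honest accepting proofs in all four layers of $\pi$), and similarly $\psi^{\tar}$ from $C(\sigma')$. For completeness, given a reconfiguration sequence $\sigma^{(0)},\dots,\sigma^{(t)}$ for $\Pi$, I would exhibit a reconfiguration sequence in Gap $4$-Parallel $q$-PCPP Reconfiguration: at each stage we hold three layers equal to some $C(\sigma^{(k)})$ and only the fourth layer (indexed by a ``currently free'' coordinate, cycling through $[4]$) is allowed to drift, together with its proof layer; since at every moment three of the four layers agree on a common satisfying codeword, the corresponding $\mathcal V_i$ (for $i$ the free index) accepts with probability $1$, so the ``there exists $i$'' clause of the Yes case holds throughout. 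The bit-by-bit moves and the code-adjacency ($C(\sigma^{(k)})$ and $C(\sigma^{(k+1)})$ differ in $\le d_C \cdot(\text{something})$ bits, hence reachable by single-bit steps within the codeword set) are routine to spell out. For soundness: given any reconfiguration sequence in the parallel-PCPP world all of whose members have some $\mathcal V_i$ accepting with probability $\ge 1-\varepsilon$, soundness of $\mathcal V_i$ forces the three strings $\{x_j\}_{j\neq i}$ to each be within $\delta$ of a codeword and the decoded $y_j$'s to pairwise differ in $\le 1$ bit and satisfy $\Pi$; unique decoding (via $\delta<d_C/2$) and the binary-alphabet pigeonhole (three strings pairwise at Hamming distance $\le 1$ must have two equal) let us extract a majority-vote assignment $z^{(k)}\in\{0,1\}^{n\log|\Sigma_0|}$ that satisfies $\Pi$, and one checks $z^{(k)}$ and $z^{(k+1)}$ differ in $\le 1$ coordinate. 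That yields a reconfiguration sequence from $\sigma$ to $\sigma'$ for $\Pi$, contradicting the No instance. Hence the Yes/No dichotomy is preserved, giving PSPACE-hardness of Gap$_{1,1-\varepsilon}$ $4$-Parallel $q$-PCPP$_\Sigma$ Reconfiguration.

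The main obstacle I anticipate is the soundness extraction argument across \emph{consecutive} stages of the reconfiguration sequence — in particular tracking which index is ``free'' and arguing the decoded assignment changes by at most one coordinate when a single $\{0,1\}^4$-variable is flipped. A single-coordinate change in the parallel instance flips one bit in all four layers simultaneously, so one must argue that the majority-vote $z^{(k)}$ nonetheless changes in at most one coordinate of $\Sigma_0^n$; this uses that the flipped bit lies in one block position and that the three relevant layers (which determine $z^{(k)}$) were pairwise within distance $1$ before and after. A secondary subtlety is making sure the PCPP proof layers $\pi$ can also be reconfigured single-bit-at-a-time while maintaining acceptance probability $1$ in the completeness direction — this is handled exactly as in \cite{KM24} by only ever moving the proof layers attached to non-free indices when that index becomes free, and it relies on the ``drop any one copy, majority-vote the other three'' structure that motivated using four copies rather than three. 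Everything else (choice of code, polynomial-time bound from randomness $O(\log n)$, translation through \Cref{cor:pcpp-to-csp}) is bookkeeping.
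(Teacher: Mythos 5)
The paper does not actually supply a proof of \Cref{thm:km24} — it is labeled as ``implicit in \cite{KM24},'' and what the paper offers is the sketch in the Proof Overview of \Cref{sec:overview}. Your plan matches that sketch quite closely: start from the exact $\RIH{1}{2}{\Sigma_0}$, encode assignments via a binary code whose unique-decoding radius exceeds $\delta$ (with enough slack so a single bit flip cannot move the decoded codeword), build the four symmetric circuits $\mathcal C_i$, instantiate the hypothesized PCPP for each, note that the resulting verifiers are parallelizable because the $\mathcal C_i$ are literally the same circuit applied to three of the four blocks, and finish with the ``drop one copy, majority vote the rest'' completeness/soundness analysis. This is the intended argument, and your identification of the two places needing care — the syntactic parallelizability of the verifiers and the stability of the majority-vote extraction across adjacent reconfiguration steps — is apt.

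Two small clarifications. First, in your completeness sketch, when $\sfv=i$ is the free index the reconfiguration move should change $x^{(i)}$ together with the proof layers $\{\pi^{(j)}\}_{j\neq i}$ \emph{of the verifiers currently not being checked}; you cannot move $\pi^{(i)}$ freely while $\sfv=i$, since $\mathcal V_i$ reads it. Your ``secondary subtlety'' paragraph seems to have this in mind, but the earlier phrase ``the fourth layer ... together with its proof layer'' points the wrong way. Second, the worry about a single-column flip disturbing all four layers at once is resolved by choosing $\delta$ (relative) a constant factor below half the code's relative distance: if every relevant $x^{(j)}$ is $\delta$-close to a codeword both before and after a one-bit change, the uniquely decoded $y_j$ cannot jump, so the majority-vote $z^{(k)}$ changes only when the accepting index $\sfv$ shifts, and there the usual two-overlapping-triples argument from \cite{KM24} bounds the change to one symbol. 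Neither point is a genuine gap — just places where the write-up would need tightening — and the parallelized reconfiguration does not introduce any new obstruction relative to \cite{KM24}'s soundness analysis.
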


\iffalse
Here follows a remark on the extraction and generalization from \cite{KM24}.

\begin{remark}\label{rmk:code}
    While \cite{KM24} uses a binary error correcting code $C:\{0,1\}^k \to \{0,1\}^n$, their construction still works if we use a code $C':\{0,1\}^k \to \Sigma^n$ instead, as long as $C'$ has distance $2\delta$ (and thus,  unique decoding radius $\delta$), $n=\text{poly}(k)$, and the decoding of the nearest codeword can be done in time $\text{poly}(n)$.
    \xr{Is there a simple example of such a code? If not, we can restrict the range of $\delta$.}
\end{remark}
\fi
\iffalse
\begin{remark}
    The 4 PCPPs in \cite{KM24} were not organized in a parallel way. In fact, the authors did not use the fact that they are parallelizable. 
\end{remark}
\fi

%\xr{In fact the variables in \cite{KM24} are not organized in a parallel way, and we are omitting the reasons why they can be done so (1. for completeness, differing in at most one entry in the matrix implies differing in at most one column in the matrix; 2. for soundness, even if we allow for $\psi,\psi'$ to differ in a whole column (which means $\psi(x^{(1)}),\psi'(x^{(1)})$, $\psi(x^{(2)}),\psi'(x^{(2)})$, ..., $\psi(x^{(4)}),\psi'(x^{(4)})$ can each differ in a position!), we can still decode and the analysis still goes through). I'm thinking whether or not to explain these - seems to need to open the box too much.}

Combining \Cref{thm:km24} with \Cref{cor:pcpp-to-csp}, we have \Cref{thm:main} as a corollary.

\thmmain*

%\xr{Maybe by the new construction based on list decoding, we can relax the $\frac{1}{2}$?}

%\xr{Todo: check \cite{HO24}'s construction}

%\xr{Todo: make a table for \cite{KM24} and \cite{HO24} and our's original trade-off.}

\bibliographystyle{alpha}
\bibliography{main}

\newcommand{\etalchar}[1]{$^{#1}$}
\begin{thebibliography}{CVDHJ11}

\bibitem[ALM{\etalchar{+}}98]{arora1998proof}
Sanjeev Arora, Carsten Lund, Rajeev Motwani, Madhu Sudan, and Mario Szegedy.
\newblock Proof verification and the hardness of approximation problems.
\newblock {\em Journal of the ACM (JACM)}, 45(3):501--555, 1998.

\bibitem[AS98]{arora1998probabilistic}
Sanjeev Arora and Shmuel Safra.
\newblock Probabilistic checking of proofs: A new characterization of {NP}.
\newblock {\em Journal of the ACM (JACM)}, 45(1):70--122, 1998.

\bibitem[BGH{\etalchar{+}}06]{BGHSV06}
Eli Ben{-}Sasson, Oded Goldreich, Prahladh Harsha, Madhu Sudan, and Salil~P. Vadhan.
\newblock Robust {PCP}s of proximity, shorter {PCP}s, and applications to coding.
\newblock {\em {SIAM} J. Comput.}, 36(4):889--974, 2006.

\bibitem[Bon13]{bonsma2013complexity}
Paul Bonsma.
\newblock The complexity of rerouting shortest paths.
\newblock {\em Theoretical computer science}, 510:1--12, 2013.

\bibitem[CVDHJ11]{cereceda2011finding}
Luis Cereceda, Jan Van Den~Heuvel, and Matthew Johnson.
\newblock Finding paths between 3-colorings.
\newblock {\em Journal of graph theory}, 67(1):69--82, 2011.

\bibitem[DR06]{dinur2006assignment}
Irit Dinur and Omer Reingold.
\newblock Assignment testers: Towards a combinatorial proof of the {PCP} theorem.
\newblock {\em SIAM Journal on Computing}, 36(4):975--1024, 2006.

\bibitem[GKMP09]{gopalan2009connectivity}
Parikshit Gopalan, Phokion~G Kolaitis, Elitza Maneva, and Christos~H Papadimitriou.
\newblock The connectivity of {B}oolean satisfiability: computational and structural dichotomies.
\newblock {\em SIAM Journal on Computing}, 38(6):2330--2355, 2009.

\bibitem[GLR{\etalchar{+}}24a]{guruswami2024almost}
Venkatesan Guruswami, Bingkai Lin, Xuandi Ren, Yican Sun, and Kewen Wu.
\newblock Almost optimal time lower bound for approximating parameterized clique, {CSP}, and more, under {ETH}.
\newblock {\em arXiv preprint arXiv:2404.08870}, 2024.

\bibitem[GLR{\etalchar{+}}24b]{GLRSW24}
Venkatesan Guruswami, Bingkai Lin, Xuandi Ren, Yican Sun, and Kewen Wu.
\newblock Parameterized inapproximability hypothesis under exponential time hypothesis.
\newblock In {\em Proceedings of the 56th Annual ACM Symposium on Theory of Computing}, STOC 2024, page 24–35, New York, NY, USA, 2024. Association for Computing Machinery.

\bibitem[Har04]{Harsha2004Robust}
Prahladh Harsha.
\newblock {\em Robust {PCP}s of Proximity and Shorter {PCP}s}.
\newblock PhD thesis, Massachusetts Institute of Technology, 2004.
\newblock Ph.D. thesis.

\bibitem[HO24]{HO24}
Shuichi Hirahara and Naoto Ohsaka.
\newblock Probabilistically checkable reconfiguration proofs and inapproximability of reconfiguration problems.
\newblock In Bojan Mohar, Igor Shinkar, and Ryan O'Donnell, editors, {\em Proceedings of the 56th Annual {ACM} Symposium on Theory of Computing, {STOC} 2024, Vancouver, BC, Canada, June 24-28, 2024}, pages 1435--1445. {ACM}, 2024.

\bibitem[ID14]{ito2014approximability}
Takehiro Ito and Erik~D Demaine.
\newblock Approximability of the subset sum reconfiguration problem.
\newblock {\em Journal of Combinatorial Optimization}, 28:639--654, 2014.

\bibitem[IDH{\etalchar{+}}11]{ito2011complexity}
Takehiro Ito, Erik~D Demaine, Nicholas~JA Harvey, Christos~H Papadimitriou, Martha Sideri, Ryuhei Uehara, and Yushi Uno.
\newblock On the complexity of reconfiguration problems.
\newblock {\em Theoretical Computer Science}, 412(12-14):1054--1065, 2011.

\bibitem[KM24]{KM24}
{Karthik {C. S.}} and Pasin Manurangsi.
\newblock On inapproximability of reconfiguration problems: {PSPACE}-hardness and some tight {NP}-hardness results.
\newblock {\em Electron. Colloquium Comput. Complex.}, pages TR24--007, 2024.

\bibitem[LRSW23]{LRSW23}
Bingkai Lin, Xuandi Ren, Yican Sun, and Xiuhan Wang.
\newblock Improved hardness of approximating k-clique under {ETH}.
\newblock In {\em 64th {IEEE} Annual Symposium on Foundations of Computer Science, {FOCS} 2023, Santa Cruz, CA, USA, November 6-9, 2023}, pages 285--306. {IEEE}, 2023.

\bibitem[MR08]{moshkovitz2008two}
Dana Moshkovitz and Ran Raz.
\newblock Two-query {PCP} with subconstant error.
\newblock {\em Journal of the ACM}, 57(5):1--29, 2008.

\bibitem[Nis18]{nishimura2018introduction}
Naomi Nishimura.
\newblock Introduction to reconfiguration.
\newblock {\em Algorithms}, 11(4):52, 2018.

\bibitem[Ohs23]{ohsaka2023gap}
Naoto Ohsaka.
\newblock Gap preserving reductions between reconfiguration problems.
\newblock In {\em 40th International Symposium on Theoretical Aspects of Computer Science}, 2023.

\bibitem[Ohs24a]{Ohs24ICALP}
Naoto Ohsaka.
\newblock Alphabet reduction for reconfiguration problems.
\newblock In Karl Bringmann, Martin Grohe, Gabriele Puppis, and Ola Svensson, editors, {\em 51st International Colloquium on Automata, Languages, and Programming, {ICALP} 2024, July 8-12, 2024, Tallinn, Estonia}, volume 297 of {\em LIPIcs}, pages 113:1--113:17. Schloss Dagstuhl - Leibniz-Zentrum f{\"{u}}r Informatik, 2024.

\bibitem[Ohs24b]{Ohs24}
Naoto Ohsaka.
\newblock Gap amplification for reconfiguration problems.
\newblock In David~P. Woodruff, editor, {\em Proceedings of the 2024 {ACM-SIAM} Symposium on Discrete Algorithms, {SODA} 2024, Alexandria, VA, USA, January 7-10, 2024}, pages 1345--1366. {SIAM}, 2024.

\bibitem[OM22]{ohsaka2022reconfiguration}
Naoto Ohsaka and Tatsuya Matsuoka.
\newblock Reconfiguration problems on submodular functions.
\newblock In {\em Proceedings of the Fifteenth ACM International Conference on Web Search and Data Mining}, pages 764--774, 2022.

\bibitem[vdH13]{van2013complexity}
Jan van~den Heuvel.
\newblock The complexity of change.
\newblock {\em Surveys in combinatorics}, 409(2013):127--160, 2013.

\end{thebibliography}

\end{document}